\documentclass[10pt]{article}
\usepackage{amsbsy}
\usepackage{amsfonts}
\usepackage{amsmath}
\usepackage{amsthm}
\usepackage{amssymb}
\usepackage{apacite}
\usepackage{setspace}
\usepackage{graphicx,psfrag,epsf}
\usepackage{bm,multicol}
\usepackage[english]{babel}
\usepackage{natbib}
\usepackage[T1]{fontenc}
\usepackage[utf8]{inputenc}
\usepackage{authblk}
\usepackage{xcolor}
\usepackage{natbib}

\usepackage[title]{appendix}

\newtheorem{theorem}{Theorem}

\newtheorem{remark}{Remark}

\doublespace
\newcommand{\newc}{\newcommand}
\newc{\N}{\mbox{N}}
\newc{\1}{\bf{1}}
\onehalfspacing

\def\signed #1{{\leavevmode\unskip\nobreak\hfil\penalty50\hskip2em
  \hbox{}\nobreak\hfil(#1)%
  \parfillskip=0pt \finalhyphendemerits=0 \endgraf}}

\newsavebox\mybox

\begin{document}
\title{Increasing the Replicability for Linear Models via Adaptive Significance Levels}
\author[1]{D. Vélez\footnote{daiver.velez@upr.edu, ORCID=0000-0001-7162-8848}}
\author[2]{ M.E. Pérez\footnote{maria.perez34@upr.edu, ORCID=0000-0001-8641-8405}}
\author[2]{L. R. Pericchi\footnote{luis.pericchi@upr.edu, ORCID=0000-0002-7096-3596}}
\affil[1]{\small University of Puerto Rico, Río Piedras Campus, Statistical Institute and Computerized Information Systems, Faculty of Business Administration, 15 AVE Universidad STE 1501, San Juan, PR 00925-2535, USA}
\affil[2]{University of Puerto Rico, Río Piedras Campus, Department of Mathematics, Faculty of Natural Sciences, 17 AVE Universidad STE 1701, San Juan, PR 00925-2537, USA}
\date{}
\maketitle
\begin{abstract}
We put forward an adaptive  alpha (Type I Error) that decreases as the information grows, for hypothesis tests in which nested linear models are compared. A less elaborate adaptation was already presented in \citet{PP2014} for comparing general i.i.d. models. In this article we present refined versions to compare nested linear models. This calibration may be interpreted as a Bayes-non-Bayes compromise, of a simple translations of a Bayes Factor on frequentist terms that leads to statistical consistency, and most importantly, it is a step towards statistics that promotes replicable scientific findings.

\end{abstract}

\textbf{Keywords:} p-value calibration; Bayes factor, linear model; likelihood ratio; adaptive alpha; PBIC  \\

\textit{MSC2020: 62C05, 62C10, 62J20}
\newpage
\section{Motivation}
Obtaining a $p$-value lower than $0.05$ no longer opens the doors for publication, but now statisticians must provide alternatives to scientists. One of the most important problems in statistics and in science as a whole, is to provide statistical measures of evidence that lead to  replicable scientific findings. In this article, we propose an adaptive alpha level for linear models that depends on the design matrices, the difference in dimension between the models and the ``effective'' sample size. This adaptive alpha mimics the behaviour of a powerful Bayes Factor based on recent improvements of BIC (Bayarri et. al 2019)  but uses the familiar concepts of \textit{Significance Hypothesis Testing}.     

\section{Basic Derivation}
Consider the linear regression model $\mathbf{y}=\mathbf{X}\boldsymbol{\beta}+\boldsymbol{\epsilon}$, where $\mathbf{y}$ represents the n-dimensional random vector of response variables, $\mathbf{X}$ is the $n\times k$ matrix of non-stochastic explanatory variables (for simplicity, here we assume that $\mathbf{X}$ is a full rank matrix), $\boldsymbol{\beta}$ is a k-dimensional vector of regression parameters, $\boldsymbol{\epsilon}$ is an n-dimensional vector of standard normal errors, i.e. $\boldsymbol{\epsilon}\sim N(0,\sigma^2\mathbf{I}_n)$, and $\sigma$ is the standard deviation of the error, $\sigma>0$.

We denote with $M$ the full model whose matrix form is given by:

\begin{eqnarray*}
\mathbf{y}~~~~~&=&~~~~~~~~~~~~~~~\mathbf{X}~~~~~~~~~~~~~~~~~~~\boldsymbol{\beta}~~~~+~~~~~\boldsymbol{\epsilon}\\
\begin{bmatrix}
~y_1&\\
~y_2&\\
~\vdots\\
~y_n&\\
\end{bmatrix}&=&\begin{bmatrix}
1&x_{12}&x_{13}&\cdots & x_{1k}\\
1&x_{22}&x_{23}&\cdots & x_{2k}\\
\vdots &\vdots &\vdots & \ddots &\vdots\\
1&x_{n2}&x_{n3}&\cdots & x_{nk}\\
\end{bmatrix}\begin{bmatrix}
~\beta_1&\\
~\beta_2&\\
~\vdots\\
~\beta_k&\\
\end{bmatrix}+\begin{bmatrix}
~\epsilon_1&\\
~\epsilon_2&\\
~\vdots\\
~\epsilon_n&\\
\end{bmatrix}
\end{eqnarray*}

\noindent where $\epsilon_i\sim N(0,\sigma^2), \text{with}\hspace{.2cm} 1\leq i\leq n$.\\

Now suppose that we want to perform pairwise model comparisons between nested generic sub-models $M_i$ and $M_j$ from $M$, where $M_j$ is a sub-model having $j(\leq k)$ regression coefficients, with $M_i$ nested to $M_j$. Formally, we want to test the hypothesis
 $$H_i:\text{Model}\hspace{.1cm} M_i\hspace{0.3cm} versus \hspace{.3cm}H_j:\text{Model}\hspace{.1cm} M_j,$$ 
  
\noindent in other words, we are comparing the following two nested linear models $$M_i:\mathbf{y}=\mathbf{X}_i\boldsymbol{\delta}_i+\boldsymbol{\epsilon}_i,\hspace{.5cm}\boldsymbol{\epsilon}_i\sim N(0,\sigma_i^2\mathbf{I}_n)$$ and  $$M_j:\mathbf{y}=\mathbf{X}_j\boldsymbol{\beta}_j+\boldsymbol{\epsilon}_j,\hspace{.5cm}\boldsymbol{\epsilon}_j\sim N(0,\sigma_j^2\mathbf{I}_n).$$

So the Bayes Factor is:
$$B_{ij}(\mathbf{y})=\dfrac{\displaystyle\int f(\mathbf{y}|\mathbf{X}_i\boldsymbol{\delta}_i,\sigma_i^2\mathbf{I}_n)\pi^N(\boldsymbol{\delta}_i,\sigma_i)d\boldsymbol{\delta}_id\sigma_i}{\displaystyle\int f(\mathbf{y}|\mathbf{X}_j\boldsymbol{\beta}_j,\sigma_j^2\mathbf{I}_n)\pi^N(\boldsymbol{\beta}_j,\sigma_j)d\boldsymbol{\beta}_jd\sigma_j}.$$
The construction of the adaptive alpha is based on $B_{ij}(\mathbf{y})$, without explicit assessment of prior distributions by the user. Instead we will use well established statistical practices to directly construct summaries of evidence. 

\begin{itemize}
  \item[1.] \textit{Approximation to Bayes factors under regularity conditions:} Laplace's asymptotic method, under regularity conditions, gives the following approximation \citep[see for example][]{BerPer01}:
\begin{equation}\label{eq1}
B_{ij}=\dfrac{f(\mathbf{y}|\mathbf{X}_i\widehat{\boldsymbol{\delta}}_i,S_i^2\mathbf{I}_n)|\hat{I}_i|^{-1/2}}{f(\mathbf{y}|\mathbf{X}_j\widehat{\boldsymbol{\beta}}_j,S_j^2\mathbf{I}_n)|\hat{I}_j|^{-1/2}}\cdot\dfrac{(2\pi)^{i/2}\pi^N(\widehat{\boldsymbol{\delta}}_i,S_i)}{(2\pi)^{j/2}\pi^N(\widehat{\boldsymbol{\beta}}_j,S_j)},
\end{equation} 
  
\noindent where $\widehat{\boldsymbol{\delta}}_i, S_i^2, \widehat{\boldsymbol{\beta}}_j, S_j^2,$  are MLE's at the parameters and $\hat{I}_i, \hat{I}_j$ are the observed information matrices respectively for $M_i$ and $M_j$.
Since the first factor typically goes to $\infty$ or to $0$ as the sample size accumulates, but the second factor stays bounded, it is useful to rewrite (\ref{eq1}) as:

\begin{equation}\label{eq2}
-2\log(B_{ij})=-2\log\left(\dfrac{f(\mathbf{y}|\mathbf{X}_i\widehat{\boldsymbol{\delta}}_i,S_i^2\mathbf{I}_n)}{f(\mathbf{y}|\mathbf{X}_j\widehat{\boldsymbol{\beta}}_j,S_j^2\mathbf{I}_n)}\right)-2\log\left(\dfrac{|\hat{I}_j|^{1/2}}{|\hat{I}_i|^{1/2}}\right)+C.  
\end{equation}
 \item[2.]\textit{Likelihood ratio:} The likelihood ratio can be written as:
\begin{equation}\label{eq3}
    \dfrac{f(\mathbf{y}|\mathbf{X}_i\widehat{\boldsymbol{\delta}}_i,S_i^2\mathbf{I}_n)}{f(\mathbf{y}|\mathbf{X}_j\widehat{\boldsymbol{\beta}}_j,S_j^2\mathbf{I}_n)}=\left(\frac{S_{j}^2}{S_{i}^2}\right)^{\frac{n}{2}}=\left(\dfrac{\mathbf{y}^t(\mathbf{I}-\mathbf{H}_j)\mathbf{y}}{\mathbf{y}^t(\mathbf{I}-\mathbf{H}_i)\mathbf{y}}\right)^{\frac{n}{2}}
    \end{equation} 
See Appendix 1 for derivations.

 \item[3.]\textit{The Fisher information matrix:} The Observed Fisher Information Matrix (OFIM) with $i$ adjustable parameters is 

\begin{equation}\label{eq4}
\hat{I}_i(\widehat{\boldsymbol{\delta}_i})=\dfrac{1}{S_i^2}\cdot \mathbf{X}_i^t\mathbf{X}_i.
\end{equation}
Returning to equation (\ref{eq2}) and using (\ref{eq3}) and (\ref{eq4}) we have 

\begin{equation}\label{eq5}
-2\log(B_{ij})=-(n-1)\log\left(\dfrac{\mathbf{y}^t(\mathbf{I}-\mathbf{H}_j)\mathbf{y}}{\mathbf{y}^t(\mathbf{I}-\mathbf{H}_i)\mathbf{y}}\right)-\log\left(\dfrac{|\mathbf{X}_j^t\mathbf{X}_j|}{|\mathbf{X}_i^t\mathbf{X}_i|}\right)+C.
\end{equation}
 The constant $C$ depends on the prior assumptions and does not go to zero, but it is of lesser importance as the sample size grows.
\end{itemize}

\subsection{Sampling distribution of the likelihood ratio}
 Under $H_0$, the sampling distribution of  $\dfrac{\mathbf{y}^t(\mathbf{I}-\mathbf{H}_j)\mathbf{y}}{\mathbf{y}^t(\mathbf{I}-\mathbf{H}_i)\mathbf{y}}$ is a beta distribution, 
 \begin{equation}
    \dfrac{\mathbf{y}^t(\mathbf{I}-\mathbf{H}_j)\mathbf{y}}{\mathbf{y}^t(\mathbf{I}-\mathbf{H}_i)\mathbf{y}}\sim Beta\left(\dfrac{n-j}{2},\dfrac{q}{2}\right) 
 \end{equation}
 where $q=j-i$ \citep[see][Corollary 1]{art1}.

 \begin{theorem}
 \begin{equation}\label{eq7}
 -(n-1)\log\left(\dfrac{\mathbf{y}^t(\mathbf{I}-\mathbf{H}_j)\mathbf{y}}{\mathbf{y}^t(\mathbf{I}-\mathbf{H}_i)\mathbf{y}}\right)\sim Ga\left(\frac{q}{2},\frac{\frac{n-j}{n-1}}{2}\right)
 \end{equation}
 \end{theorem}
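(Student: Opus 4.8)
The plan is to read the statement as a large-$n$ identification of the law of a fixed smooth function of a Beta variate, obtained by transforming the density in the Beta result above. Write
$$W=\frac{\mathbf{y}^t(\mathbf{I}-\mathbf{H}_j)\mathbf{y}}{\mathbf{y}^t(\mathbf{I}-\mathbf{H}_i)\mathbf{y}},$$
so that under $H_0$ we have $W\sim Beta(a,b)$ with $a=(n-j)/2$ and $b=q/2$. First I would pass to $T=-\log W$, whose density on $(0,\infty)$ follows from the change of variables $w=e^{-t}$:
$$f_T(t)=\frac{1}{B(a,b)}\,e^{-at}\,(1-e^{-t})^{b-1}.$$
Since the object of interest is $-(n-1)\log W=(n-1)T$, once the law of $T$ is pinned down the final step is a single linear rescaling.

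The heuristic that drives everything is that $a=(n-j)/2\to\infty$ while $b=q/2$ stays fixed, so $W$ concentrates at $1$ and the mass of $T$ sits near $0$ on the scale $t=O(1/a)$ (indeed $E[T]=q/(n-j)$). On that scale $1-e^{-t}=t\,(1+O(t))$, while Stirling gives $1/B(a,b)=\Gamma(a+b)/(\Gamma(a)\Gamma(b))=a^{b}/\Gamma(b)\,(1+O(1/a))$. Substituting both into $f_T$ leaves
$$f_T(t)\approx\frac{a^{b}}{\Gamma(b)}\,t^{b-1}e^{-at},$$
the density of $Ga(b,\text{rate }a)=Ga(q/2,(n-j)/2)$. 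Multiplying by the constant $n-1$ divides the rate by $n-1$, which yields exactly the stated $Ga\!\left(q/2,\,\tfrac12(n-j)/(n-1)\right)$.

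To make this rigorous I would replace the density heuristic by an MGF computation, which removes the hand-waving about the region of concentration. From $E[W^{s}]=\Gamma(a+s)\Gamma(a+b)/(\Gamma(a)\Gamma(a+b+s))$, rescale by setting $U=aT$ and compute, for fixed $s<1$,
$$E\!\left[e^{sU}\right]=E\!\left[W^{-sa}\right]=\frac{\Gamma(a(1-s))\,\Gamma(a+b)}{\Gamma(a)\,\Gamma(a(1-s)+b)}.$$
Applying the ratio asymptotic $\Gamma(x+c)/\Gamma(x)=x^{c}(1+o(1))$ with the fixed shift $c=b$ to each factor gives $E[e^{sU}]\to a^{b}(a(1-s))^{-b}=(1-s)^{-b}$, the MGF of $Ga(b,1)$. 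Hence $U=aT\Rightarrow Ga(b,1)$, i.e. $T$ is asymptotically $Ga(q/2,(n-j)/2)$, and the rescaling by $n-1$ closes the argument. As a cheap consistency check one can match first moments: $E[-\log W]=\psi(a+b)-\psi(a)=\psi((n-i)/2)-\psi((n-j)/2)$, and $\psi(x)=\log x+O(1/x)$ gives $E[-\log W]=\log\frac{n-i}{n-j}+O(1/n)=\frac{q}{n-j}+O(1/n)$, which is the mean $q/(n-j)$ of $Ga(q/2,(n-j)/2)$.

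The main obstacle is conceptual rather than computational: the identity is exact only in the degenerate case $q=2$, where $W^{a}$ is uniform and $-\log W$ is genuinely exponential, so for general $q$ the symbol ``$\sim$'' must be understood as an $n\to\infty$ approximation and all the real work is in controlling the non-Gamma factor $(1-e^{-t})^{b-1}$. The cleanest way to discharge that is the Gamma-ratio asymptotic $\Gamma(x+c)/\Gamma(x)\sim x^{c}$ used above; everything else is bookkeeping together with the elementary fact that scaling a Gamma variate rescales its rate.
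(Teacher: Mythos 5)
Your density computation is essentially the paper's own proof: both arguments transform the $Beta\!\left(\frac{n-j}{2},\frac{q}{2}\right)$ law of the ratio through $z=-(n-1)\log w$, apply the Gamma-ratio asymptotic $\Gamma(x+c)\approx\Gamma(x)x^{c}$ to the Beta normalizing constant, and replace $(1-e^{-t})^{q/2-1}$ by $t^{q/2-1}$ on the scale where the mass concentrates. Where you genuinely depart from the paper is the MGF argument. The paper stops at the pointwise density approximation, with an error term written as $O(n^{-2})$ that is not uniform in $z$ and is delicate near $z=0$ when $q<2$ (the exponent $q/2-1$ is then negative); your computation
\begin{equation*}
E\!\left[e^{saT}\right]=\frac{\Gamma(a(1-s))\,\Gamma(a+b)}{\Gamma(a)\,\Gamma(a(1-s)+b)}\longrightarrow (1-s)^{-b},\qquad s<1,
\end{equation*}
upgrades the statement to an honest convergence-in-distribution result and isolates the single analytic ingredient (the Gamma-ratio asymptotic) that does all the work. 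Your side remarks are also correct and worth keeping: the identity is exact precisely when $q=2$ (where $W^{a}$ is uniform), and the first-moment check $\psi((n-i)/2)-\psi((n-j)/2)=q/(n-j)+O(1/n)$ matches the Gamma mean. The one point to flag, as the paper also should, is that the theorem is written with the symbol $\sim$ as if it were an exact distributional identity, whereas both your argument and the paper's establish it only as an $n\to\infty$ approximation.
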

 \begin{proof}
 Let $Z=-(n-1)\log(Y)$ and $Y\sim Beta\left(\frac{n-j}{2},\frac{q}{2}\right)$, then
\begin{eqnarray*}
F_{Z}(z)=P(Z\leq z)&=&P(Y\geq e^{-\frac{z}{n-1}})\\
                   &=&1-F_Y(e^{-\frac{z}{n-1}})\\
                   & \Downarrow & \\
                  f_Z(z)&=&\left(\frac{1}{n-1}\right)e^{-\frac{z}{n-1}}f_Y(e^{-\frac{z}{n-1}}). \\
\end{eqnarray*}

\noindent Thus $$f_Z(z)=\frac{1}{n-1}\dfrac{\Gamma\left(\frac{n-j}{2}+\frac{q}{2}\right)}{\Gamma\left(\frac{n-j}{2}\right)\Gamma\left(\frac{q}{2}\right)}e^{-\left(\frac{n-j}{2(n-1)}\right)z}(1-e^{-\frac{z}{n-1}})^{\frac{q}{2}-1}$$
 but $\Gamma(n+\alpha)\approx \Gamma(n)n^{\alpha}$ \citep[see][eq. 6.1.46]{Abra}, so,
 
 \begin{eqnarray*}
f_Z(z)&=&\frac{1}{n-1}\dfrac{\left(\frac{n-j}{2}\right)^{\frac{q}{2}}}{\Gamma\left(\frac{q}{2}\right)}e^{-\left(\frac{n-j}{2(n-1)}\right)z}(1-e^{-\frac{z}{n-1}})^{\frac{q}{2}-1}\\
      &=& \dfrac{\left(\frac{n-j}{2(n-1)}\right)}{\Gamma\left(\frac{q}{2}\right)}e^{-\left(\frac{n-j}{2(n-1)}\right)z}\left(\frac{n-j}{2}-\frac{n-j}{2}e^{-\frac{z}{n-1}}\right)^{\frac{q}{2}-1}\\
      &=& \dfrac{\left(\frac{n-j}{2(n-1)}\right)}{\Gamma\left(\frac{q}{2}\right)}e^{-\left(\frac{n-j}{2(n-1)}\right)z}\left(\frac{n-j}{2(n-1)}z\right)^{\frac{q}{2}-1}+O(n^{-2})
\end{eqnarray*}

\noindent hence $Z\sim Ga\left(\frac{q}{2},\frac{\frac{n-j}{n-1}}{2}\right)$.
 \end{proof}
 
\begin{remark}
Theorem~1 is consistent with Wilk's Theorem, that is

\begin{center}
$Ga\left(\frac{q}{2},\frac{\frac{n-j}{n-1}}{2}\right)$ $\longrightarrow$ $\mathcal{X}^2_{(q)}$ as $n \rightarrow \infty,$
\end{center}

see for example \citet{CasBer2001}, Theorem 10.3.3.
\end{remark}

\subsection{Condition for the adaptive $\alpha$ to be approximately equivalent (yield the same decision) to a Bayes factor} 

If we denote by $g_{n,{\alpha}}(q)$ the quantile of the test statistic of (\ref{eq7}) corresponding to a tail probability $\alpha$, using (\ref{eq5}) and (\ref{eq7}) we can make an important departure from classical hypothesis testing: instead of fixing the tail probability (and the quantile) as in significance testing, we let the quantile vary according to the following rule

\begin{equation}\label{eq8}
g_{\alpha_{(\mathbf{X}_i,\mathbf{X}_j,n)}}(q)=g_{n,{\alpha}}(q)+\log\left(\dfrac{|\mathbf{X}_j^t\mathbf{X}_j|}{|\mathbf{X}_i^t\mathbf{X}_i|}\right).
\end{equation} 

\noindent Then the Bayes factor will converge to a constant (and $g_{\alpha_{(\mathbf{X}_i,\mathbf{X}_j,n)}}(q)$ will replace the fixed quantile). Note that (\ref{eq8}) establishes an approximate equivalence between Bayes Factor and adaptive significance levels.

\section{Adaptive alpha for linear models}
In order to establish the asymptotic correspondence between $\alpha$ levels and Bayes factor we need the following asymptotic expansion for the upper tail for large $Ga(\frac{q}{2},\frac{\frac{n-j}{n-1}}{2})=g_n(q)$,

\begin{equation}\label{eq9}
1-F(g_n(q))=1-Pr(g_n(q))\approx \dfrac{g_n(q)^{\frac{q}{2}-1}\exp\{-\frac{n-j}{2(n-1)}\cdot g_n(q)\}}{\left(\frac{2(n-1)}{n-j}\right)^{q/2-1}\Gamma\left(\frac{q}{2}\right)},
\end{equation}

\noindent see \citet{art2}.\\

Now we equate the significance level $\alpha$ to the approximate upper tail probability in (\ref{eq9}):

$$\alpha\approx \dfrac{g_{n,\alpha}(q)^{\frac{q}{2}-1}\exp\{-\frac{n-j}{2(n-1)}\cdot g_{n,\alpha}(q)\}}{\left(\frac{2(n-1)}{n-j}\right)^{q/2-1}\Gamma\left(\frac{q}{2}\right)}.$$

If we replace the fixed quantile $g_{n,\alpha}(q)$ by $g_{\alpha_{(\mathbf{X}_i,\mathbf{X}_j,n)}}(q)$ as in (\ref{eq8}), the following result is obtained: 

\begin{equation}\label{eq10}
\alpha_{(b,n)}(q)=\dfrac{[g_{n,\alpha}(q)+\log(b)]^{\frac{q}{2}-1}}{b^{\frac{n-j}{2(n-1)}}\cdot\left(\frac{2(n-1)}{n-j}\right)^{q/2-1}\Gamma\left(\frac{q}{2}\right)}\times C_{\alpha},
\end{equation}

\noindent where $b=\frac{|\mathbf{X}_j^t\mathbf{X}_j|}{|\mathbf{X}_i^t\mathbf{X}_i|}$. This is the simple (approximate) calibration we have been looking for, and defines the linear adaptive $\alpha_{(b,n)}$ levels and also the corresponding adaptive quantiles, which are suitable for constructing adaptive testing intervals for any $q$. Note that we still need to assign a value the constant $C_\alpha$ in (\ref{eq10}); this will be discussed in next section.

\begin{remark}\label{obs2}
Note that the adaptive significance level $\alpha_{(b,n)}$ depends on the design matrices, the sample size $n$ and the difference of dimension between the models being compared. In particular, large sample sizes and co-linearity among explanatory variables will affect the significance level.  
\end{remark}

\begin{remark}\label{obs3}
The derivation in this section will be further refined in next section along the lines of the Prior Based Bayes Factor and the Effective Sample Size Bayarri et. al (2019).
\end{remark}


\section{Strategies to select the calibration constant $C_\alpha$}
We now introduce some strategies for choosing the constant $C_\alpha$, which are simple enough for fast implementation in practice. Different strategies could be developed, providing alternative calibrations. 

\begin{itemize}

\item[1.]\textbf{The strategy of a simple approximation}\\
The simplest approximation in (\ref{eq1}), which is implicit in the BIC approximation, comes from assuming priors $\pi^N(\boldsymbol{\beta}_j,S_j)$, $\pi^N(\boldsymbol{\delta}_i,S_i)$ to be $N((\beta_j,\sigma_j)|(\boldsymbol{\beta}_j,S_j),\hat{I}_j^{*(-1)})$, $N((\delta_i,\sigma_i)|(\boldsymbol{\delta}_i,S_i),\hat{I}_i^{*(-1)})$ respectively, where $\hat{I}_k^*=\hat{I}_k/n^*$, with $\hat{I}_k$ being the observed Fisher Information matrix, but noting that $n^*$ is the Effective Sample Size, mentioned before. This leads to a $C=1$ in (\ref{eq5}) and then a $C_\alpha=\exp\left\{-\frac{n-j}{2(n-1)}\cdot g_{n,\alpha}(q)\right\}$ in (\ref{eq10}).

    \item[2.] \textbf{The strategy of a minimal balanced experiment: The one-way Layout}\\
We suppose that $m$ group of observations are available, with $n_k$ observations in the $k$th group, and that
$$y_{kh}\sim N(\mu_k,\sigma^2),~~~k=1,..,m~~,~~h=1,..,n_k,$$
independently, given $\mu_1,...,\mu_m,\sigma^2$. We shall denote by $M_i$ the model which sets $\mu_1=\cdots=\mu_m$, and by $M_j$ the model which allows $\mu_1\neq \cdots\neq\mu_m$. Note that $j=m$, $i=1$ and the matrices $\mathbf{X}_i$, $\mathbf{X}_j$, are easily identified. We have that $q=m-1$, thus (\ref{eq10}) is reduced to
$$\alpha(n_k,q)=\dfrac{\left[g_{n,\alpha}(q)+\log((\prod_{k=1}^{q+1}n_k)/n)\right]^{q/2-1}}{\left((\prod_{k=1}^{q+1}n_k)/n\right)^{\frac{q+1}{2(2q+1)}}\Gamma\left(\frac{q}{2}\right)}C_\alpha,$$
where $n=\displaystyle\sum_{k=1}^mn_k$. For the minimal balanced experiment, $n_k=2$, for each group and $n=2m=2(q+1)$  then, $$ C_\alpha=\alpha\cdot\dfrac{(2^q/(q+1))^{\frac{q+1}{2(2q+1)}}\Gamma\left(\frac{q}{2}\right)}{[g_{n,\alpha}(q)+\log(2^q/(q+1))]^{\frac{q}{2}-1}},$$ where $\alpha$ is the desired level for the minimal sample. The case $m=2$ is of particular interest since $q=1$, then the calibration constant $C_\alpha$ is:
$$C_\alpha=\alpha\cdot\sqrt{\pi\cdot g_{n,\alpha}(1)}.$$

\item[3.]\textbf{The strategy based in PBIC}\\
A major improvement, over BIC type approximations, has been recently introduced in (\citet{Bayarri2019}) and termed Prior Based Information Criterion (PBIC). The improvement is due to: i) "the sample size" $n$ is replaced by a much more precise "effective sample size" $n^e$ (for i.i.d. observations $n^e=n$, but not for non-i.i.d. observations), and ii) the effect of the prior is retained in the final expression, on which a flat tailed non-normal prior is employed.\\
This strategy consists in replacing in (\ref{eq5}) the constant $C$ that depends on the prior assumptions by
$$C=2\sum_{m_i=1}^{q_i}\log\frac{(1-e^{-v_{m_i}})}{\sqrt{2}v_{m_i}}-2\sum_{m_j=1}^{q_j}\log\frac{(1-e^{-v_{m_j}})}{\sqrt{2}v_{m_j}},$$
where $v_{m_l}=\frac{\hat{\xi}_{m_l}}{[d_{m_l}(1+n^e_{m_l})]}$ with $l=i,j$ corresponding to the Model $M_i$ and $M_j$ respectively, see (\citet{Bayarri2019}). Here $n^e_{m_l}$ refers to the effective sample size (called TESS, see ( \citet{Berger2014})). Hence

$$\alpha_{(b,n)}(q)=\dfrac{[g_{n,\alpha}(q)+\log(b)+C]^{\frac{q}{2}-1}}{b^{\frac{n-j}{2(n-1)}}\cdot\left(\frac{2(n-1)}{n-j}\right)^{q/2-1}\Gamma\left(\frac{q}{2}\right)}\times C_{\alpha},$$
and $$C_{\alpha}=\exp\left\{-\frac{n-j}{2(n-1)} \left( g_{n,\alpha}(q)+C \right)\right\}.$$
\end{itemize}

\section{Examples}
\subsection{Balanced One Way Anova}
	Suppose we have $k$ groups with $r$ observations each, for a total sample size of $kr$ and let 
$H_0: \mu_1= \cdots = \mu_k=\mu \;\; vs \;\; H_1: \mbox{At least one } \mu_i \mbox{ different}$. Then the design matrices for both models are:

{\footnotesize
\[ \mathbf{X}_1=\left(\begin{array}{c}
1\\ 
1\\ 
\vdots\\ 
1\\
\end{array}\right) \;, \mathbf{X}_k = \left( \begin{matrix}
1 & 0 & \ldots & 0 \\ 
1 & 0 & \ldots & 0 \\ 
\vdots & \vdots & \ldots & \vdots \\ 
1 & 0 & \ldots & 0 \\ 
0 & 1 & \ldots & 0 \\ 
0 & 1 & \ldots & 0 \\ 
\vdots & \vdots & \ldots & \vdots \\ 
0 & 1 & \ldots & 0 \\ 
\vdots & \vdots & \ldots & \vdots \\ 
0 & 0 & \ldots & 1 \\ 
0 & 0 & \ldots & 1 \\ 
\vdots & \vdots & \ldots & \vdots \\ 
0 & 0 & \ldots & 1
\end{matrix}\right) \; , b=\frac{|\mathbf{X}_k^t\mathbf{X}_k|}{|\mathbf{X}_1^t\mathbf{X}_1|}=k^{-1}r^{k-1},\]}

\noindent and the adaptive alpha for linear model in accordance with (\ref{eq10}) is
\begin{equation*}
    \alpha(k,r)=\dfrac{[g_{r,\alpha}(k-1)-\log(k)+(k-1)\log(r)]^{\frac{k-3}{2}}}{\left(k^{-1}r^{k-1}\right)^{\frac{r-1}{2(r-1/k)}}\left(\frac{2(r-1/k)}{r-1}\right)^{\frac{k-3}{2}}\Gamma\left(\frac{k-1}{2}\right)}C_{\alpha}.
\end{equation*}
Here, the number of replicas $r$ is The Effective Sample Size (TESS) see remark~\ref{obs3}.

We will use initially the strategy of selecting $C_{\alpha}$ by fixing the sample size for a designed experiment, as suggested in \citet{PP2014}, allowing us to compare our adaptive $\alpha$ for linear models with the simpler version suggested there. The experiments were designed using an effect size of $f=0.25$ ($f=\frac{\mu_1-\mu_2}{\sigma}$), which according to \citet{Cohen1988} represents a medium effect size. We fixed  $\alpha=0.05$ and the power at $0.8$ . The sample sizes obtained were $r_0 = 64, 40 \mbox{ and } 26$ for $k=2, 5 \mbox{ and }10$, respectively. The results shown Table \ref{tab1} evidence that both corrections for $\alpha$ yield very similar results, with the significance level decreasing steadily with the number of replicates.
	
\begin{table}[h]
     \centering
	{\small  \begin{tabular}{|r|rrr|rrr|}
			\hline 
			& \multicolumn{6}{c|}{$k$}  \\ 
			\hline 
			&\multicolumn{3}{c|}{Adaptive $\alpha$ for linear model}  &\multicolumn{3}{c|}{Adaptive $\alpha$ (PP 2014)} \\ 
			\hline 
			r & \multicolumn{1}{c}{2} & \multicolumn{1}{c}{5 }& \multicolumn{1}{c|}{10} & \multicolumn{1}{c}{2 }& \multicolumn{1}{c}{5} & \multicolumn{1}{c|}{10}  \\ 
			\hline 
			$50$ &$0.057$  &$0.0327$  &$3.6\times 10^{-3} $ &$ 0.058$ &  $0.0333$  &  $3.8 \times 10^{-3}$ \\ 
			$100$ &$0.038$& $0.0087$ & $2.2\times 10^{-4} $&$0.038$  & $0.0093$ & $2.4\times 10^{-4}$  \\  
			$500$ & $0.016$ & $0.0004$ &$3.1 \times 10^{-7}$  & $0.015$ & $0.0005$ & $3.4\times 10^{-7}$  \\ 
			$1000$ & $0.011$& $0.0001$ & $1.8\times 10^{-8}$ & $0.010$ &$0.0001$  & $2.0 \times 10^{-8}$  \\ 
			\hline 
	\end{tabular} }
\caption{Adaptive $\alpha$ for linear models vs. \citet{PP2014} original adaptive $\alpha$ for the One Way Anova problem. In both cases, a calibration strategy based on a designed experiment is used. }
      \label{tab1}
    \end{table}
  
\begin{table}[h]
    \centering
   {\small  \begin{tabular}{|c|c|c|c|}
	\hline 
	& \multicolumn{3}{c|}{$k$}  \\ 
	\hline 
	 &\multicolumn{1}{c|}{Minimal sample}  &\multicolumn{1}{c|}{Simple Calibration}&\multicolumn{1}{c|}{PBIC Calibration} \\ 
	\hline 
	r & \multicolumn{1}{c|}{2} & \multicolumn{1}{c|}{2}& \multicolumn{1}{c|}{2} \\ 
	\hline 
	$4$ & $0.0523$ & $0.0360$ & $0.0283$ \\ 
	$10$ & $0.0342$ & $0.0235$ & $0.0159$ \\ 
	$50$ &$0.0130$ & $ 0.0090$ & $0.0061$\\ 
	$100$ &$0.0087$& $0.0060$  & $0.0041$ \\  
	$500$ & $0.0035$ &$0.0024$ & $0.0017$ \\ 
	$1000$ & $0.0024$& $0.0017$& $0.0011$  \\ 
	\hline 
\end{tabular} }
\caption{Adaptive alpha for linear models in the One Way Anova setting for the three calibration strategies described in Section 4.}
    \label{tab2}
\end{table}

Table~\ref{tab2} shows how the three different calibration strategies discussed in Section 4 decrease the adaptive alpha as the effective sample size grows. It is reassuring that the different strategies yield comparable results, with the strategy based on PBIC being somewhat more drastic (in this case but not in general, see sect.7.2) in its penalization for higher samples.

We now present a simulation that shows how our methodology for decreasing alpha works precisely to improve scientific inference and interpretation. Inspired by an example in \citet{SBB2001} we perform the following experiment:  we simulate $r$ data points from each of two normal distributions $N(\mu_1, \sigma)$ and $N(\mu_2,\sigma)$. We replicate this $2K$ times. For $K$ of the simulations, $\mu_1-\mu_2=0$, while for the other $K$ $\mu_1-\mu2=f>0$.  For all $2K$ replications, we test the hypotheses $H_0: \mu_1=\mu2$ vs $H_1: \mu_1 \neq \mu_2$, and then count  how many of the p-values lie between $0.05 - \varepsilon$ and $0.05$. Note that all these p-values would be deemed enough for rejecting $H_0$ if $\alpha=0.05$ is selected.  Finally, we determine the proportion of these "significant" p-values obtained from samples where $H_0$ is true. 

Table 3 presents the median percentage of these significant p-values coming from samples where $H_0$ is true for 100 replicates of the simulation scheme with $K=4000$, $f=0.25$, $\sigma=1$ and $\varepsilon=0.04$, for $r=10, 50, 100, 500$ and $1000$.  The usual (flawed) interpretation is that only about $5\%$ should be generated from $H_0$. However, the reality is that much more are false positives. The proportion is not monotonic with $r$, but always far higher than $5\%$. In the limit, for $r=1000$, almost 100\% of these significant values near $0.05$ are generated from $H_0$. This result should not be surprising, as for large sample sizes and fixed significance levels, most null hypotheses are rejected. Table 3 also presents the proportion of significant p-values coming from $H_0$ when the $\alpha$ level is corrected according to the method suggested in this paper, using a calibration strategy based on PBIC. With this correction, the proportion of false positives decreases steadily, providing a more reliable  Type I error  control. 

\begin{table}[h]
    \centering
    {\small  \begin{tabular}{|c|c|c|}
	\hline 
	& \multicolumn{2}{c|}{\% of samples with $0.01<p<0.05$}  \\ 
	\hline 
	 &\multicolumn{1}{c|}{without adjustment}  &\multicolumn{1}{c|}{with PBIC calibration } \\ 
	\hline 
	r & \multicolumn{1}{c|}{2-groups} & \multicolumn{1}{c|}{2-group}\\ 
	\hline

	$10$ & $39.06\%$ & $34.18\%$ \\ 
	$50$ &$21.43\%$ & $8.57\%$ \\ 
	$100$ &$15.73\%$& $3.07\%$  \\  
	$500$ & $ 39.04\%$ &$0.22\%$ \\ 
	$1000$ & $97.15\%$& $0.11\%$\\ 
	\hline 
\end{tabular} }
\caption{Percentage of p-values between $0.01$ and $0.05$ considered significant coming from data generated under the null hypothesis when equal number of tests with $H_0$ true and $H_0$ false are generated for different sample sizes. Uncorrected and corrected alpha levels are considered.}
\label{tab3}
\end{table}

\subsection{Linear Regression Model}

Consider a linear regression model $M_j: y_v=\beta_1+\beta_2x_{v2}+\cdots+\beta_j x_{vj}+\epsilon_v$ 
with $1\leq v\leq n$ and $2\leq j\leq k$, then $$|\mathbf{X}_j^t\mathbf{X}_j|=n(n-1)^{j-1}\prod_{l=2}^{j}s_{l}^2|R_j|$$ where $s_l^2$ and $R_j$ is the variance  and the correlation matrix of the predictors in model $M_j$ respectively, so in the adaptive alpha in (\ref{eq10})
\begin{equation}\label{eq11}
    b=(n-1)^{j-i}\left(\prod_{l=i+1}^{j}s^2_{l}\right)|R_{j-i}-R_{ij}^tR_i^{-1}R_{ij}|,
\end{equation}

\noindent Here $R_{ij}$ is the correlation matrix between predictors of the models $M_j$ that are not in $M_i$ with predictors of the model $M_i$, and $R_{j-i}$ is the correlation matrix of the predictors of the models $M_j$ that are not in $M_i$, see Appendix 2 for more detail.

As an example, we analyze a data set taken from \citet{acuna} which can be accessed at \url{http://academic.uprm.edu/eacuna/datos.html}. We want to predict the average mileage per gallon (denoted by \texttt{mpg}) of a set of $n=82$ vehicles using four possible predictor variables:  cabin capacity in cubic feet (\texttt{vol}), engine power (\texttt{hp}), maximum speed in miles per hour (\texttt{sp}) and vehicle weight in hundreds of pounds (\texttt{wt})



To study the effect of the variances of the predictors and their correlations on the proposed adaptive alpha, we will compare the following models,

\begin{enumerate}
    \item[1.] $H_0:M_2:$(mpg=$\beta_1$+$\beta_2\text{wt}_i$+$\epsilon_i$) vs $H_1:M_3:$(mpg=$\beta_1$+$\beta_2\text{wt}_i$+$\beta_3\text{sp}_i$+$\epsilon_i$)
    \item[2.] $H_0:M_2:$(mpg=$\beta_1$+$\beta_2\text{wt}_i$+$\epsilon_i$) vs  $H_1:M_3:$(mpg=$\beta_1$+$\beta_2\text{wt}_i$+$\beta_3\text{hp}_i$+$\epsilon_i$)
    \item[3.] $H_0:M_2:$(mpg=$\beta_1$+$\beta_2\text{wt}_i$+$\epsilon_i$) vs $H_1:M_3:$(mpg=$\beta_1$+$\beta_2\text{wt}_i$+$\beta_3\text{vol}_i$+$\epsilon_i$)
\end{enumerate}

For all these tests (\ref{eq11}) can be rewritten as
$$b=(n-1)s_{3}^2(1-\rho_{23}^2),$$
where $s_{3}^2$ is the variance of the entering predictor in model $M_3$ and $\rho_{23}$ is the correlation between $\text{wt}$ y the new predictor in $M_3$. 

Table  \ref{tab4} shows the correction of the significance $\alpha$ (using $0.05$ as initial value) through the proposed adaptive alpha using simple and PBIC calibrations 

\begin{table}
    \centering
\begin{tabular}{ccrrrrrr}
\hline
Test &\small{Predictor}&\multicolumn{1}{c}{\small Var($\cdot$)} & \small{Cor(wt,$\cdot$)} & \multicolumn{1}{c}{b} & p-value  & \small{$\alpha_{Simple}$} & \small{$\alpha_{PBIC}$}\\ 
\hline
1 & \texttt{sp} &197.1 &0.68 &8612.9& 0.0325 & 0.0004  & 0.0134\\
2 & \texttt{hp} & 3230.9 & 0.83 & 80449.5& 0.1661 & 0.0001 & 0.0046\\
3 & \texttt{vol} & 491.3 &0.38 &  33901.1& 0.6482 & 0.0002  & 0.0087\\ 
\hline
\end{tabular} 
    \caption{ Effect of the variances and correlations of the predictors on the calibration of significance level $\alpha$ in the modelling of the efficiency (mpg) for a set of cars \citep{acuna} when the simple calibration and the PBIC calibration are used.}
    \label{tab4}
\end{table}

In all cases, the significance level is substantially reduced, specially using the simple callibration. Note that the strongest correction (the largest reduction on the $\alpha$ level) corresponds to the engine power (\texttt{hp}). This variable has both the largest variance and the highest correlation with the weight. For test 1 (including speed \texttt{sp} in the model), the use of the adaptive alpha levels will change the inference, as $p=0.0325$ is smaller than $0.05$ but larger than the calibrated $\alpha$ values. In the other two cases, the inference is not altered. 

\vspace{0.25cm}

  
 
 \subsection{Comparing significance level calibrations based on BIC and PBIC}
 
We will now present two challenging examples, and compare the behavior of the Adaptive Alpha Aignificance Levels based on BIC \citep[as presented in][]{PP2014} and the Adaptive Alpha for Linear Models calibrated using PBIC proposed in this paper.
 
 \subsubsection{Comparing Findley's counterexample: adaptive alpha via (BIC vs. PBIC)}
 Consider the following simple linear model
 
 $$Y_i=\frac{1}{\sqrt{i}}\cdot\theta+\epsilon_i, ~~\text{where}~\epsilon_i\sim N(0,1),\\
 i=1,2,3,..,n$$
 and we are comparing the models $H_0:\theta=0$ and $H_1:\theta\neq 0$. This  is a Classical and challenging counter example against BIC and the Principle of Parsimony. In \citet{Bayarri2019} it is shown the inconsistency of BIC but the consistency of PBIC in this problem. 
 
 Here we will compare two calibration strategies for the significance level $\alpha$:
 
 \begin{enumerate}
\item Adaptive Alpha via BIC \citep{PP2014}:
 $$\alpha_n(q)=\frac{[\chi_\alpha^2(q)+q\log(n)]^{\frac{q}{2}-1}}{2^{\frac{q}{2}-1}n^{\frac{q}{2}}\Gamma\left(\frac{q}{2}\right)}\times C_\alpha$$
 $C_\alpha$ based on a simple calibration.

\item Adaptive Alpha for Linear Models calibrated using PBIC:
 $$\alpha_{(b,n)}(q)=\dfrac{[g_{n,\alpha}(q)+\log(b)+C]^{\frac{q}{2}-1}}{b^{\frac{n-j}{2(n-1)}}\cdot\left(\frac{2(n-1)}{n-j}\right)^{q/2-1}\Gamma\left(\frac{q}{2}\right)}\times C_{\alpha},$$
$$C_{\alpha}=\exp\left\{-\frac{n-j}{2(n-1)} \left( g_{n,\alpha}(q)+C \right)\right\}.$$
$$C=-2\log\frac{(1-e^{-v})}{\sqrt{2}v}, v=\frac{\hat{\theta}^2}{d(1+n^e)}, d=\left(\sum_{i=1}^{n}\frac{1}{i}\right)^{-1}, n^e=\sum_{i=1}^{n}\frac{1}{i}$$ 
\end{enumerate}

Table~\ref{tab5} shows the behavior for both strategies when $n$ grows and we consider $\alpha=.05$ and $q=1$. The adaptive alpha for PBIC corrects far more slowly, as it should, due to the decreasing information content of successive observations.

\begin{table}[h]
    \centering
    {\small  \begin{tabular}{|c|c|c|}
	\hline 
	n &\multicolumn{1}{c|}{Adaptive Alpha via PBIC} &\multicolumn{1}{c|}{Adaptive Alpha via BIC}  \\ 
	\hline 
	$10$ & 0.0457 &0.0149\\ 
	$20$ & 0.0367 &0.0100\\ 
	$30$ & 0.0338 &0.0079 \\  
	$40$ & 0.0319 &0.0070 \\ 
	$50$ & 0.0307 & 0.0059\\ 
	$100$ & 0.0282 & 0.0040\\
	$1000$ & 0.0253 & 0.0011\\
	$10000$ & 0.0248 & 0.0003\\
	\hline 
\end{tabular} }
\caption{Behavior of the Adaptive Alpha based on BIC (simple aproximation) vs the Adaptive Alpha for Linear Models calibrated using PBIC for Findley's counterexample when sample size $n$ increases}
\label{tab5}
\end{table}

\subsubsection{Adaptive alpha via PBIC for testing equality of two means with unequal variances}
Consider comparing two normal means via the test $H_0:\mu_1=\mu_2$ versus $H_1:\mu_1\neq\mu_2$, where the associated known variances, $\sigma^2_1$ and $\sigma^2_2$ are not equal. $$\mathbf{Y}=\mathbf{X}\mathbb{\mu}+\mathbb{\epsilon}=\begin{pmatrix}
1&0\\
\vdots&\vdots\\
1& 0\\
0& 1\\
\vdots&\vdots\\
0&1
\end{pmatrix}\begin{pmatrix}
\mu_1\\
\mu_2
\end{pmatrix}+\begin{pmatrix}
\epsilon_{11}\\
\vdots\\
\epsilon_{2n_2}
\end{pmatrix},$$
$$\times \mathcal{\epsilon}\sim N(\mathbf{0},\text{diag}\{\underbrace{\sigma_1^2,...,\sigma_1^2}_{n_1},\underbrace{\sigma_2^2,...,\sigma_2^2\}}_{n_2})$$
Defining $\alpha=(\mu_1+\mu_2)/2$ and $\beta=(\mu_1-\mu_2)/2$ places this in the linear model comparison framework,
$$\mathbf{Y}=\mathbf{B}\binom{\mathbb{\alpha}}{\mathbb{\beta}}+\mathbb{\epsilon}$$
with $$\mathbf{B}=\begin{pmatrix}
1&\frac{1}{2}\\
\vdots&\vdots\\
1& \frac{1}{2}\\
1& -\frac{1}{2}\\
\vdots&\vdots\\
1&-\frac{1}{2}
\end{pmatrix}$$
where we are comparing $M_0:\beta=0$ versus $M_1:\beta\neq 0$.\\
So for adaptive alpha via PBIC,
$$C=-2\log\frac{(1-e^{-v})}{\sqrt{2}v}$$
$v=\frac{\hat{\beta}^2}{d(1+n^e)}, d=\left(\frac{\sigma_1^2}{n_1}+\frac{\sigma_2^2}{n_2}\right), n^e=\max\left\{\frac{n_1^2}{\sigma_1^2},\frac{n_2^2}{\sigma_2^2}\right\}\left(\frac{\sigma_1^2}{n_1}+\frac{\sigma_2^2}{n_2}\right)$ and the adaptive alpha behaves according to the table~\ref{tab6}, here $\alpha=.05$. 

Table~\ref{tab6} shows how the adaptive alpha based on PBIC adapts faster when the sample size for the group with smaller variance increases. On the contrary, the adaptive alpha based on BIC calibration adapts at a similar pace for both groups.      
\begin{table}[h]
    \centering
    {\small  \begin{tabular}{|c|c|c|c|}
	\hline 
	&& Adaptive alpha via PBIC& Adaptive alpha via BIC  \\ 
	\hline 
	$n_1$ & $n_2$ & \multicolumn{1}{c|}{$\sigma^2_1=14$, $\sigma^2_2=140$}& $n=n_1+n_2$\\ 
	\hline

	10 & $10$ &0.0159 &  0.0099 \\ 
	10 & $100$ &0.0111 & 0.0038 \\ 
	10 & $500$ &0.0104 &  0.0016 \\
	100 & $10$ &0.0110 &  0.0038\\
	100 & $100$ &0.0042 & 0.0027\\ 
	100 & $500$ &0.0030 & 0.0015 \\ 

	\hline 
\end{tabular} }
\caption{Adaptive alpha via PBIC and BIC for testing equality of two means when variances are unequal. }
\label{tab6}
\end{table}

\section{Final comments, questions and some answers}
\begin{itemize}
   \item[1.] The adaptive $\alpha$ provides guidance for adjusting significance to the sample size. The Linear Model version incorporates not only the sample size and the difference of dimensions, but also the information provided by the predictors or the design, and particularly their correlations, correcting for co-linearity.
\item[2.] The adaptive $\alpha$ is simple to use, and gives equivalent results than a sensible Bayes Factor,like Bayes Factors with Intrinsic Priors,  but easier to be employed by practitioners, even by those who are not trained in sophisticated Bayesian Statistics. We hope that this development will give tools to the practice of Statistics.
\item[3.] The results exposed here, make use of state of the art large sample approximations of Bayes Factors like the PBIC and can be coupled with recent sensible base thresholds like $\alpha=0.005$, \citet{Benjamin2017}. 

\end{itemize}

\section*{Acknowledgements}
The work of M.E. P\'erez and L.R. Pericchi has been partially funded by NIH grants U54CA096300, P20GM103475 and R25MD010399. 

\bibliographystyle{chicago}
\bibliography{ARLB}
\section*{Appendix 1 The likelihood ratio}
Define $$r(\mathbf{y}|(\mathbf{X}_i,\mathbf{X}_j))=\dfrac{f(\mathbf{y}|\mathbf{X}_i\widehat{\boldsymbol{\delta}}_i,S_i^2\mathbf{I}_n)}{f(\mathbf{y}|\mathbf{X}_j\widehat{\boldsymbol{\beta}}_j,S_j^2\mathbf{I}_n)}$$
we will perform the calculations for the hypothesis test $$H_0:\text{Model}\hspace{.1cm} M_i\hspace{0.3cm} versus \hspace{.3cm}H_1:\text{Model}\hspace{.1cm} M_j.$$

Indeed, for model $M_i$ $$L(\mathbf{y}|\mathbf{X}_i,\sigma_i^2,\boldsymbol{\delta}_i)=\frac{1}{(2\pi)^{n/2}(\sigma_i^2)^{n/2}}\exp\left\{-\frac{1}{2\sigma_i^2}(\mathbf{y}-\mathbf{X}_i\boldsymbol{\delta}_i)^t(\mathbf{y}-\mathbf{X}_i\boldsymbol{\delta}_i)\right\}.$$

Since the MLE of $\boldsymbol{\delta}_i$ is $\widehat{\boldsymbol{\delta}}_i=(\mathbf{X}_i^t\mathbf{X}_i)^{-1}\mathbf{X}_i^t\mathbf{y}$ and the MLE of $\sigma_i^2$ is $S_{i}^2=\dfrac{\mathbf{y}^t(\mathbf{I}-\mathbf{H}_i)\mathbf{y}}{n}$, where $\mathbf{H}_i=\mathbf{X}_i(\mathbf{X}_i^t\mathbf{X}_i)^{-1}\mathbf{X}_i^t$

$$\sup_{\Omega_0}L(\mathbf{y}|\mathbf{X}_i,\sigma_i^2,\boldsymbol{\delta}_i)=\frac{1}{(2\pi)^{n/2}(S_{i}^2)^{n/2}}\exp\left\{-\frac{n}{2}\right\}.$$

For model $M_j$ $$L(\mathbf{y}|\mathbf{X}_j,\sigma_j^2,\boldsymbol{\beta}_j)=\frac{1}{(2\pi)^{n/2}(\sigma_j^2)^{n/2}}\exp\left\{-\frac{1}{2\sigma_j^2}(\mathbf{y}-\mathbf{X}_j\boldsymbol{\beta}_j)^t(\mathbf{y}-\mathbf{X}_j\boldsymbol{\beta}_j)\right\}.$$

Since MLE of $\boldsymbol{\beta}_j$ is $\widehat{\boldsymbol{\beta}}_j=(\mathbf{X}_j^t\mathbf{X}_j)^{-1}\mathbf{X}_j^t\mathbf{y}$ and the MLE of $\sigma_j^2$ is $S_{j}^2=\dfrac{\mathbf{y}^t(\mathbf{I}-\mathbf{H}_j)\mathbf{y}}{n}$

$$\sup_{\Omega}L(\mathbf{y}|\mathbf{X}_j,\sigma_j^2,\boldsymbol{\beta}_j)=\frac{1}{(2\pi)^{n/2}(S_{j}^2)^{n/2}}\exp\left\{-\frac{n}{2}\right\}.$$

Thus the likelihood ratio is $$r(\mathbf{y}|(\mathbf{X}_i,\mathbf{X}_j))=\dfrac{\sup_{\Omega_0}L(\mathbf{y}|\mathbf{X}_i,\sigma_i^2,\boldsymbol{\alpha}_i)}{\sup_{\Omega}L(\mathbf{y}|\mathbf{X}_j,\sigma_j^2,\boldsymbol{\beta}_j)}=\left(\frac{S_{j}^2}{S_{i}^2}\right)^{\frac{n}{2}}=\left(\dfrac{\mathbf{y}^t(\mathbf{I}-\mathbf{H}_j)\mathbf{y}}{\mathbf{y}^t(\mathbf{I}-\mathbf{H}_i)\mathbf{y}}\right)^{\frac{n}{2}}.$$ 

\section*{Appendix 2 An expression for b in (\ref{eq10})}

Consider linear regression model $M_j: y_v=\beta_1+\beta_2x_{v2}+\cdots+\beta_j x_{vj}+\epsilon_v$ 
with $1\leq v\leq n$ and $2\leq j\leq k$, then

$$\mathbf{X}_j=\begin{bmatrix}
1& x_{12}-\bar{x}_2&\cdots&x_{1j}-\bar{x}_j\\
1& x_{22}-\bar{x}_2&\cdots&x_{2j}-\bar{x}_j\\
\vdots&\vdots&\vdots&\vdots\\
1&x_{n2}-\bar{x}_2&\cdots&x_{nj}-\bar{x}_j\\
\end{bmatrix}$$
and
$$\mathbf{X}_j^t\mathbf{X}_j=\begin{bmatrix}
n& 0&0&\cdots&0\\
0& (n-1)s_2^2&(n-1)s_2s_3\rho_{23}&\cdots&(n-1)s_2s_j\rho_{2j}\\
\vdots&\vdots&\vdots&\vdots&\vdots\\
0&(n-1)s_2s_j\rho_{2j}&(n-1)s_3s_j\rho_{2j}&\cdots&(n-1)s_j^2\\
\end{bmatrix}$$
then 
$$|\mathbf{X}_j^t\mathbf{X}_j|=n(n-1)^{j-1}\begin{vmatrix}
s_2^2& s_2s_3\rho_{23}&\cdots&s_2s_j\rho_{2j}\\
s_2s_3\rho_{23}& s_3^2&\cdots&s_3s_j\rho_{3j}\\
\vdots&\vdots&\vdots&\vdots\\
s_2s_j\rho_{2j}&s_3s_j\rho_{3j}&\cdots&s_j^2\\
\end{vmatrix},$$
note that row $l$ and column $l$ are multiplied by $s_l$, using properties of the determinants 

$$|\mathbf{X}_j^t\mathbf{X}_j|=n(n-1)^{j-1}s_2^2s_3^2\cdots s_j^2\begin{vmatrix}
1& \rho_{23}&\cdots&\rho_{2j}\\
\rho_{23}& 1&\cdots&\rho_{3j}\\
\vdots&\vdots&\vdots&\vdots\\
\rho_{2j}&\rho_{3j}&\cdots&1\\
\end{vmatrix}=n(n-1)^{j-1}\prod_{l=2}^{j}s_{l}^2|R_j|$$
on the other hand, 

$$R_j=\begin{bmatrix}
1& \rho_{23}&\cdots&\rho_{2j}\\
\rho_{23}& 1&\cdots&\rho_{3j}\\
\vdots&\vdots&\vdots&\vdots\\
\rho_{2j}&\rho_{3j}&\cdots&1\\
\end{bmatrix}=\begin{bmatrix}
R_i& R_{ij}\\
R_{ij}&R_{j-i}\\
\end{bmatrix}$$
where 
$$R_{ij}=\begin{bmatrix}
\rho_{2j+1}& \rho_{3j+1}&\cdots&\rho_{ii+1}\\
\rho_{2j+2}& \rho_{3j+2}&\cdots&\rho_{ii+2}\\
\vdots&\vdots&\vdots&\vdots\\
\rho_{2j}&\rho_{3j+2}&\cdots&\rho_{ij}\\
\end{bmatrix}~~\text{and}~~R_{j-i}=\begin{bmatrix}
1& \rho_{i+2i+1}&\cdots&\rho_{ji+1}\\
\rho_{i+1i+2}& 1&\cdots&\rho_{ji+2}\\
\vdots&\vdots&\vdots&\vdots\\
\rho_{i+1j}&\rho_{i+2j}&\cdots&1\\
\end{bmatrix}.$$
Now since $\mathbf{X}_j$ is a full rank matrix, it can be seen that $$|R_j|=|R_i||R_{j-i}-R_{ij}^tR_i^{-1}R_{ij}|$$ 
thus $$ b=\frac{|\mathbf{X}_j^t\mathbf{X}_j|}{|\mathbf{X}_i^t\mathbf{X}_i|}=(n-1)^{j-i}\left(\prod_{l=i+1}^{j}s^2_{l}\right)|R_{j-i}-R_{ij}^tR_i^{-1}R_{ij}|$$
\end{document}